\newtheorem{mydef}{Definition}
\newtheorem{mytheorem}{Theorem}
\newtheorem{myexample}{Example}
\newtheorem{mylemma}{Lemma}
\title{Algorithmic complexity of pair cleaning method for k-satisfiability problem. (draft version)}
\author{Sergey Kardash}
\date{April 18, 2012}
\begin{document}
\maketitle
\begin{abstract}
It's known that 3-satisfiability problem is NP-complete. Here polynomial algorithm for solving
k-satisfiability ($k \geq 2 $) problem is assumed. In case theoretical points are right, sets P ans NP are equal.
\end{abstract}

\section{Introduction}
\begin{mydef}
Formulae A(x) is called k-CNF if
$$A(x) = \bigcap_{i=1}^{n}{\bigcup_{j=1}^{k}{x^{\sigma_{ij}}_{u_{ij}}}}, \sigma_{ij} \in \{0, 1\},  u_{ij} \in \{1,\cdots, m\}, \forall i \in \{1, \cdots, n\}, \forall j \in \{1, \cdots, k\}$$
$\bigcap$ - conjuntion operation, \\
$\bigcup$ - disjuntion operation, \\
$m$ - number of variables in formulae,\\
$n$ - number of clauses, \\
$k$ - number of variables in each disjunction,\\
$n_{t}$ - number of clause groups.\\

$x^{\sigma} = \begin{cases} x, \sigma = 0 \\ \bar{x}, \sigma = 1 \end{cases}$
\end{mydef}
\begin{myexample}
3-CNF $A(x) = (x_{1} \cup x_{2} \cup x_{3}) \cap (\bar{x_{1}} \cup x_{3} \cup \bar{x_{4}})$. Here $m = 4$,
$n = 2$, $k = 3$, $n_{t} = 2$.
\end{myexample}
\begin{mydef}
Let formulae $A(x)$ is k-CNF. Problem of defining whether equation $A(x) = 1$ has solution or not is called k-satisfiability problem of formulae A(or k-SAT(A)).
\end{mydef}
\begin{myexample}
k-satisfiability problem of formulae A described in Example 1 (k-SAT(A)) is defining whether $\exists x \in
B^m$ (boolean vector of size m): $A(x) = 1$. It's evident that $x_0=(1,1,1,1)$ makes $A(x_0) = 1$. $A(x_0)$
is satisfiable. k-CNF $B(x) = (x_1 \cup x_2) \cap (\bar{x_1} \cup x_2) \cap (x_1 \cup \bar{x_2})\cap
(\bar{x_1} \cup \bar{x_2})$ is an example of not satisfiable task. There is no $x_0: A(x_0) = 1$.
On the contrary $A(x) = 0, \forall x $.
\end{myexample}

It was proved that 2-satisfiability problem has polynomial solution (by Krom \cite{krom}).
We are going to show polynomial algorithm(from $n$) for any $k-SAT$. By the way we describe
method of getting 1 explicit solution of corresponding equation $A(x) = 1$ in case source task
is satisfiable which is polynomial from $n$ and method of solving equation $A(x) = 1$ which is
polynomial from number of such solutions.
\section{Method description}
Initially new mathematic objects and operations for them are introduced. After description of method in
pure mathematic way algorithmic presentation which is more readable is given. Almost each structure
 has 2 common structures associated with it: 1)variable set associated with this structure and
 2)some value sets of these variables. Though they will be defined separately it's easy to see
 common logic of their introduction.

Let $x_{s_{1}s_{2} \cdots s_{k}} = (x_{s_{1}}, x_{s_{2}}, \cdots, x_{s_{k}})$. Further in order to
avoid enumeration of variables which are not related to described structure we list important variables
using such notation.

\begin{mydef}
Clause group signed $T_{s_{1}s_{2} \cdots s_{k}}(A)$ is a set of all clauses $\bigcup_{j=1}^{k}{x^{
\sigma_{t_{j}}}_{s_{j}}}$ where $u_{i1}u_{i2} \cdots u_{ik} = s_{1}s_{2} \cdots s_{k}$. Variable set 
associated with $T_{u_{s_{1}}u_{s_{2}} \cdots s_{k}}(A)$(or $X(T_{s_{1}s_{2}\cdots s_{k}}(A))$) is 
$x_{s_{1}s_{2}\cdots s_{k}}$. Value of clause group $T_{s_{1}s_{2} \cdots s_{k}}(A)$ is a value of 
$x_{s_{1}s_{2} \cdots s_{k}}$ such that k-CNF consisted of all clauses from clause group $T_{s_{1}s_{2} 
\cdots s_{k}}$ is equal to 1. Value set induced by clause group $T_{s_{1}s_{2} \cdots s_{k}}(A)$ 
(or $V(T_{s_{1}s_{2} \cdots s_{k}}(A)$) is a set of all values of this clause group.
\end{mydef}

\begin{myexample}
Though clauses $x_1 \cup x_2 \cup x_3$ and $\bar{x_1} \cup x_2 \cup \bar{x_3}$ have different degrees they
belong to the same clause group $T_{123}$ in case they present in formulae A.
\end{myexample}

\begin{myexample}
For example clause group $T_{123}$ consists of clauses $x_1 \cup x_2 \cup x_3$ and $\bar{x_1} \cup x_2
\cup \bar{x_3}$. Value set induced by this clause group can be presented using table below:\\
\begin{center}
  \begin{tabular}{ l c r }
    $x_1$ & $x_2$ & $x_3$ \\
    0 & 0 & 1 \\
    0 & 1 & 0 \\
    0 & 1 & 1 \\
    1 & 0 & 0 \\
    1 & 1 & 0 \\
    1 & 1 & 1 \\
  \end{tabular}\\
\end{center}
Each row corresponds to one value of $x_{123}$ . We have excluded from this list only sets which make
3-CNF $(x_1 \cup x_2 \cup x_3) \cap (\bar{x_1} \cup x_2 \cup \bar{x_3})$ equal to 0 ($x_{123} = (0, 0, 0)$
and $x_{123} = (1, 0, 1)$).
\end{myexample}

\begin{mydef}
k-CNF A(x) all clauses of that can be classified into $n_{t}$ clause groups is called k-CNF of degree $n_{t}$.
It also can be signed as $A_{k}^{n}(x)$or $A_{k}(x)$ or $A^{n}(x)$.
\end{mydef}

\begin{myexample}
2-SAT $A(x) = (x_1 \cup x_2) \cap (\bar{x_1} \cup x_2) \cap (x_2 \cup x_3) \cap (\bar{x_2} \cup \bar{x_3})$
has 2 clause groups $T_{12}$ and $T_{23}$, so it's degree is 2 and it can be signed as $A_{2}^{2}(x)$ or
$A_{2}(x)$ or $A^{2}(x)$.
\end{myexample}

\begin{mydef}
Clause combination $F$ for formulae $A(x)$ consisted from clause groups $T_{u_{i_{1}1}u_{i_{1}2}
\cdots u_{i_{1}k}}(A), T_{u_{i_{2}1}u_{i_{2}2}, \cdots, u_{i_{2}k}}(A), \cdots, $ \\ $
T_{u_{i_{l}1}u_{i_{l}2} \cdots u_{i_{l}k}}(A)$ (or $F(T_{u_{i_{1}1}u_{i_{1}2} \cdots u_{i_{1}k}},
T_{u_{i_{2}1}u_{i_{2}2}, \cdots, u_{i_{2}k}}, \cdots, T_{u_{i_{l}1}u_{i_{l}2} \cdots u_{i_{l}k}}, A)$)
is a set of listed clause groups. Variable set associated with it is $x_{h_{1}h_{2} \cdots h_{r}}$ where
each variable index from set of clause groups is presented only once.
\end{mydef}
We'll deal with different value sets of variables associated with clause combination and in order not to
confuse them let's write them out separately.
\begin{mydef}
Value of clause combination $F(T_{u_{i_{1}1}u_{i_{1}2} \cdots u_{i_{1}k}}, T_{u_{i_{2}1}u_{i_{2}2},
\cdots, u_{i_{2}k}}, \cdots, T_{u_{i_{l}1}u_{i_{l}2} \cdots u_{i_{l}k}}, A)$ is a value of $x_{h_{1}h_{2}
\cdots h_{r}}$ - variable set associated with it such that k-CNF consisted of all clauses associated with
listed clause groups equal to 1.
\end{mydef}

\begin{mydef}
Value set of clause combination $F(T_{u_{i_{1}1}u_{i_{1}2} \cdots u_{i_{1}k}},
T_{u_{i_{2}1}u_{i_{2}2}, \cdots, u_{i_{2}k}}, \cdots, T_{u_{i_{l}1}u_{i_{l}2}
\cdots u_{i_{l}k}}, A)$ based on $A(x)$ is a set of values of this clause combination. \end{mydef}

\begin{mydef}
Value set of clause combination  $F(T_{u_{i_{1}1}u_{i_{1}2} \cdots u_{i_{1}k}},
T_{u_{i_{2}1}u_{i_{2}2}, \cdots, u_{i_{2}k}}, \cdots, T_{u_{i_{l}1}u_{i_{l}2} \cdots
u_{i_{l}k}}, A)$ induced by A(x) is a set of all values of this clause combination.
\end{mydef}

It's easy to see that value set induced by clause combination $F(T_{u_{i_{1}1}u_{i_{1}2} \cdots u_{i_{1}k}},
T_{u_{i_{2}1}u_{i_{2}2}, \cdots, u_{i_{2}k}}, \cdots, T_{u_{i_{l}1}u_{i_{l}2} \cdots u_{i_{l}k}}, A)$ is
a value set based on this clause combination.

\begin{myexample}
Let we have 2 clause groups: $T_{12}(A)$ which has clauses $x_1 \cup x_2$ and $\bar{x_1} \cup x_2$
in formulae A and $T_{23}(A)$ which has clauses $x_2 \cup x_3$ and $\bar{x_2} \cup \bar{x_3}$.
Then value set induced by clause combination $F(T_{12}, T_{23})$ is a set of all possible values of
$x_{123}$ which make 2-SAT $(x_1 \cup x_2) \cap (\bar{x_1} \cup x_2) \cap (x_2 \cup x_3) \cap (\bar{x_2}
\cup \bar{x_3})$ equal to 1.
\begin{center}
  \begin{tabular}{ l c r }
    $x_1$ & $x_2$ & $x_3$ \\
    0 & 1 & 0 \\
    0 & 1 & 1 \\
  \end{tabular}\\
\end{center}
Each row of the list is a value of clause combination $F(T_{12}, T_{23})$, i. e. $x_{123} = (0, 1, 0)$.
\end{myexample}

\begin{mydef}
Relationship structure for k-CNF $A(x)$ ($R(A)$) is a set of all possible clause combinations consisted of
$(k+1)$ clause groups.
\end{mydef}

\begin{myexample}
For 2-CNF $A(x) = (x_1 \cup x_2) \cap (x_1 \cup \bar{x_2}) \cap (x_2 \cup x_3)\cap (x_1 \cup \bar{x_3})
\cap (x_1 \cup x_4) \cap (\bar{x_1} \cup x_4)$ clause groups are: $T_{12}, T_{23}, T_{13}, T_{14}$.
$R(A) = \{ F(T_{12}, T_{23}, T_{13}), F(T_{12}, T_{23}, T_{14}), F(T_{12}, T_{13}, T_{14}), F(T_{23},
T_{13}, T_{14})\}$.
\end{myexample}

\begin{mydef}
Value set of relationship structure induced by k-CNF $A(x)$ $(V_{i}(R(A)))$ is a set of value sets of clause
combinations induced by $A(x)$ involved in relationship structure based on k-CNF $A(x)$.
\end{mydef}

\begin{myexample}
For Example 7 value set of relationship structure induced by k-CNF $A(x)$ is a set of tables listed below:\\
  $V(F(T_{12}, T_{23}, T_{13}, A))$:
  \begin{tabular}{ c c c c }
    $x_1$ & $x_2$ & $x_3$\\
    1 & 0 & 1 \\
    1 & 1 & 0 \\
    1 & 1 & 1 \\
  \end{tabular}
, $V(F(T_{12}, T_{23}, T_{14}, A))$:
  \begin{tabular}{ c c c c }
    $x_1$ & $x_2$ & $x_3$ & $x_4$\\
    1 & 0 & 1 & 1 \\
    1 & 1 & 0 & 1 \\
    1 & 1 & 1 & 1 \\
  \end{tabular}
, $V(F(T_{12}, T_{13}, T_{14}, A))$:
  \begin{tabular}{ c c c c }
    $x_1$ & $x_2$ & $x_3$ & $x_4$ \\
    1 & 0 & 0 & 1 \\
    1 & 0 & 1 & 1 \\
    1 & 1 & 0 & 1 \\
    1 & 1 & 1 & 1 \\
  \end{tabular}\\
, $V(F(T_{23}, T_{13}, T_{14}, A))$:
  \begin{tabular}{ c c c c }
    $x_1$ & $x_2$ & $x_3$  & $x_4$\\
    0 & 1 & 0 & 0 \\
    0 & 1 & 0 & 1 \\
    1 & 0 & 1 & 1 \\
    1 & 1 & 0 & 1 \\
    1 & 1 & 0 & 1 \\
  \end{tabular} \\
  $V_{i}(R(A)) = \{V(F(T_{12}, T_{23}, T_{13}, A)), V(F(T_{12}, T_{23}, T_{14}, A)),
  V(F(T_{12}, T_{13}, T_{14}, A)), V(F(T_{23}, T_{13}, T_{14}, A))\}$.
\end{myexample}

\begin{mydef}
Value set of relationship structure based on k-CNF $A(x)$($V_{b}(R(A))$) is a set of value sets of
clause combinations based on $A(x)$ involved in relationship structure based on k-CNF $A(x)$
\end{mydef}

\begin{myexample}
For Example 7 value set of relationship structure based on k-CNF $A(x)$ is any set
$V_{b}(R(A)) = (V_{1}, V_{2}, V_{3}, V_{4})$ where $V_{1} \subseteq V(F(T_{12}, T_{23}, T_{13}))$,
$V_{2} \subseteq V(F(T_{12}, T_{23}, T_{14}))$, $V_{3} \subseteq V(F(T_{12}, T_{13}, T_{14}))$,
$V_{4} \subseteq V(F(T_{23}, T_{13}, T_{14}))$. In example:\\
  $V_{1}$:
  \begin{tabular}{ c c c c }
    $x_1$ & $x_2$ & $x_3$\\
    1 & 1 & 0 \\
    1 & 1 & 1 \\
  \end{tabular}
, $V_{2}$:
  \begin{tabular}{ c c c c }
    $x_1$ & $x_2$ & $x_3$ & $x_4$\\
    1 & 1 & 1 & 1 \\
  \end{tabular}
, $V_{3}$:
  \begin{tabular}{ c c c c }
    $x_1$ & $x_2$ & $x_3$ & $x_4$ \\
    1 & 0 & 0 & 1 \\
    1 & 0 & 1 & 1 \\
    1 & 1 & 0 & 1 \\
  \end{tabular}
, $V_{4}$:
  \begin{tabular}{ c c c c }
    $x_1$ & $x_2$ & $x_3$  & $x_4$\\
    0 & 1 & 0 & 0 \\
    0 & 1 & 0 & 1 \\
    1 & 0 & 1 & 1 \\
    1 & 1 & 0 & 1 \\
    1 & 1 & 0 & 1 \\
  \end{tabular}\\
\end{myexample}

\begin{mydef}
Value set of relationship structure based on k-CNF $A(x)$ is called empty ($V(R(A)) = \varnothing$) if at
least one value set of clause combination value set of relationship structure consists of is empty.
\end{mydef}

\begin{mydef}
Let R(A) - relationship structure for k-CNF $A(x)$. $V(R(A)) = \{V_{1}, V_{2}, \cdots, V_{t},\}$,
$G(R(A)) =  = \{G_{1}, G_{2}, \cdots, G_{t},\}$ - 2 value sets of this relationship structures
based on $A(x)$. We call $V(R(A))$ included in $G(R(A))$ (or $V(R(A)) \subseteq G(R(A))$)
if $V_{i} \subseteq G_{i}, \forall i \in \{1, \cdots,t\}$.
\end{mydef}

\begin{myexample}
Let V(R(A)) is a set described in Example 9 and G(R(A)) is a set from example 8. $V \subseteq G$.
Indeed all value sets of relationship structure based on k-CNF $A(x)$ are included in the value set
of relationship structure induced by k-CNF $A(x)$.
\end{myexample}

\begin{mydef}
Let we have 2 clause combinations $F(T_{i_{1}}, T_{i_{2}}, \cdots, T_{i_{s}}, A)$ and
$F(T_{j_{1}}, T_{j_{2}}, \cdots, T_{j_{r}}, A)$. Let they have common variables $x_{i_{1}}, x_{i_{2}}, 
\cdots, x_{i_{s}}$ - those variables which present in both clause combinations. Clearing of given pair 
of value sets $V_{1}$ and $V_{2}$ of clause combinations $F(T_{i_{1}}, T_{i_{2}}, \cdots, T_{i_{s}}, A)$ 
 and $F(T_{j_{1}}, T_{j_{2}}, \cdots, T_{j_{r}}, A)$ correspondingly based on k-CNF A(x) is a process of 
deleting $x^{1}_{a_{1}a_{1} \cdots a_{z}} \in V_{1}$ for which $\nexists x^{2}_{b_{1}b_{2} \cdots
b_{u}} \in V_{2}: x^{1}_{i_{1}i_{2} \cdots i_{s}} = x^{2}_{i_{1}i_{2} \cdots i_{s}}$ and deleting 
$x^{2}_{b_{1}b_{2} \cdots b_{u}} \in V_{2}$ for which $\nexists x^{1}_{a_{1}a_{1} \cdots a_{z}} \in 
V_{1} : x^{1}_{i_{1}i_{2} \cdots i_{s}} = x^{2}_{i_{1}i_{2} \cdots i_{s}}$. Clearing procedure is briefly 
marked as $C(V_1, V_2)$. 
\end{mydef}

\begin{myexample}
Let's take 2 values of clause combinations from Example 8: \\  
$V(F(T_{12}, T_{23}, T_{13}, A))$:
  \begin{tabular}{ c c c c }
    $x_1$ & $x_2$ & $x_3$\\
    1 & 0 & 1 \\
    1 & 1 & 0 \\
    1 & 1 & 1 \\
  \end{tabular}
and 
$V(F(T_{23}, T_{13}, T_{14}, A))$:
  \begin{tabular}{ c c c c }
    $x_1$ & $x_2$ & $x_3$  & $x_4$\\
    0 & 1 & 0 & 0 \\
    0 & 1 & 0 & 1 \\
    1 & 0 & 1 & 1 \\
    1 & 1 & 0 & 1 \\
    1 & 1 & 0 & 1 \\
  \end{tabular}.\\
Common variables are $x_{123} = (x_1, x_2, x_3)$. Let's explore table which corresponds to $V(F(T_{12}, 
T_{23}, T_{13}))$. $x^{1}_{123}(1) = (1,0,1)$ has corresponding $x^{2}_{1234}(3) = (1, 0, 1, 1)$(in 
brackets $x^{2}_{1234}(3)$, 3 is a number of row in the table) and it should be saved. $x^{1}_{123}(2)$ 
has even 2 corresponding rows: $x^{2}_{1234}(4)$ and $x^{2}_{1234}(5)$. But for last one, $x^{1}_{123}(3)$, 
we can't find corresponding values from second table with the same common variables and it should be deleted 
from values based on $V(F(T_{12}, T_{23}, T_{13}))$. The same should be done with $x^{2}_{1234}(1)$ and 
$x^{2}_{1234}(2)$. After clearing \\
$V_{1}:$
\begin{tabular}{ c c c c }
    $x_1$ & $x_2$ & $x_3$\\
    1 & 0 & 1 \\
    1 & 1 & 0 \\
\end{tabular}
and $V_{2}:$
\begin{tabular}{ c c c c }
    $x_1$ & $x_2$ & $x_3$  & $x_4$\\
    1 & 0 & 1 & 1 \\
    1 & 1 & 0 & 1 \\
    1 & 1 & 0 & 1 \\
  \end{tabular}.\\
It can be briefly marked as $C(V(F(T_{12}, T_{23}, T_{13}, A)), V(F(T_{23}, T_{13}, T_{14}, A))) = 
(V_{1}, V_{2})$.
\end{myexample}

\begin{mydef}
Clearing of value set of relationship structure ($V_{r}$) based on k-CNF $A(x)$  (pair cleaning method for 
formulae $A(x)$) is a process of clearing of all possible pairs of value sets of clause combination 
 based on k-CNF $A(x)$ contained in $V_{r}$ until clearing is impossible. We'll note result of cleaning as 
 C(V(R(A))).
\end{mydef}

Pair cleaning method in algorithmic form
\begin{algorithmic}
\STATE $V_{new} \gets V_{source}(R(A))$
\REPEAT
\STATE $V_{old} \gets V_{new}$
\FOR{$i = 1 \to d-1$}
\FOR{$j = i+1 \to d$}
\STATE $(V_{new}^{i}, V_{new}^{j}) \gets C(V_{new}^{i}, V_{new}^{i})$
\ENDFOR
\ENDFOR
\UNTIL{$V_{new} = V_{old}$} \end{algorithmic}
where \\
$d$ - number of clause combinations in relationship structure,\\
$V_{source}(R(A))$ - value set of relationship structure induced by $A(x)$.

\begin{mydef}
Let $V = V(R(A))$ - value set of relationship structure based on k-CNF $A(x)$. $V$ is called 
unclearable if $V = C(V)$. 
\end{mydef}

\begin{mylemma}
Let $V = V(R(A))$ - value set of relationship structure induced by k-CNF $A(x)$. $V_{res} = C(V)$. 
$V_{res} \neq \varnothing$ $\Leftrightarrow$ $\exists$ $V_{1} \subseteq V_{res}$ where $V_{1}$ - 
unclearable value set of relationship structure based on k-CNF $A(x)$ where each value set of clause 
combination consists of 1 value of this clause combination.
\end{mylemma}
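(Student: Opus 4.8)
The plan is to prove the two implications separately, treating the backward direction as routine and concentrating the effort on the forward one. For $(\Leftarrow)$ I would argue directly from the definitions. Suppose such a $V_1 \subseteq V_{res}$ exists in which every value set of clause combination consists of exactly one value of its clause combination. Then each component of $V_1$ is non-empty, and since inclusion of value sets of relationship structure is componentwise, each corresponding component of $V_{res}$ contains at least that single value and is therefore non-empty. By the definition of an empty value set of relationship structure, $V_{res} \neq \varnothing$. This direction uses nothing about the cleaning dynamics beyond the meaning of $\subseteq$.

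For $(\Rightarrow)$ I would construct $V_1$ by a fixing-and-recleaning procedure. First observe that $V_{res} = C(V)$ is unclearable, i.e. $C(V_{res}) = V_{res}$: this is exactly the termination condition of the pair cleaning method, so $V_{res}$ is a fixed point, and every value it retains in one clause combination has, for each other clause combination, a partner value agreeing on the common variables. Starting from $V_{res}$, I would pick an arbitrary value in one component, discard from every component all values incompatible with this choice on the shared variables, re-run pair cleaning to restore a fixed point, and repeat, combination by combination, until every component is a singleton. Writing $V_1$ for the final structure, by construction $V_1 \subseteq V_{res}$ (each step only deletes values), each component of $V_1$ is a single value, and $V_1$ is unclearable because the procedure stops only at a fixed point of $C$. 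It then remains to show that this process never empties a component.

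The main obstacle, and the step I expect to be decisive, is precisely this last claim: that fixing one value and recleaning never empties any value set of clause combination. Pairwise cleaning enforces only a local, two-at-a-time agreement: for a chosen value $v$ and any single other combination there is a compatible value, and for two already-fixed values $v, v'$ and a third combination there are values compatible with $v$ and with $v'$ \emph{separately}. What the construction needs, however, is a single value in the third combination compatible with $v$ and $v'$ \emph{simultaneously}, and more generally joint compatibility with all previously fixed values. Closing this gap amounts to upgrading unclearability of $V_{res}$ (mere pairwise consistency) to global consistency of a single-valued selection, and I would attempt it by exploiting the specific overlap pattern of the relationship structure: each clause combination is built from $k+1$ clause groups, so the shared-variable graph is highly connected and every variable is pinned down by several combinations, which I would try to turn into an induction on the number of fixed combinations showing the recleaned structure stays non-empty. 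I regard this as the fragile heart of the argument, since in a general constraint network pairwise consistency does not imply the existence of a globally consistent selection; the proof must therefore extract from the $(k+1)$-group structure some additional property that rules out the standard counterexamples, and it is here that the polynomial-time claim ultimately rests.
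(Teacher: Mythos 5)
Your backward direction is fine and matches what the paper does: a nonempty singleton in every component of $V_1 \subseteq V_{res}$ forces every component of $V_{res}$ to be nonempty, by the componentwise definition of inclusion. The forward direction, however, is not a proof. You yourself isolate the decisive claim --- that fixing a single value in one clause combination and re-cleaning never empties any component, i.e.\ that pairwise unclearability of $V_{res}$ can be upgraded to a globally consistent singleton selection --- and then stop at announcing that you ``would attempt'' to establish it from the overlap pattern of the $(k+1)$-group combinations. That claim \emph{is} the content of the lemma; without it you have only restated the problem as a constraint-propagation fixed point plus an unproved extension property. Your own caveat is exactly right: pairwise consistency in a general constraint network does not imply the existence of a globally consistent assignment, and nothing in the definition of the relationship structure (all $C_{n_t}^{k+1}$ combinations of clause groups, cleaned two at a time) visibly rules out the standard counterexamples. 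So the proposal has a genuine, and acknowledged, hole at the only nontrivial step.

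For comparison, the paper's own argument takes a different route: induction on the number of clause groups $n_t$, with base case $n_t = k+1$. It deletes one clause group $T_{n_t+1}$, applies the induction hypothesis to the remaining structure to obtain a singleton selection $V_B^1$, and then tries to extend $V_B^1$ by choosing values for the combinations containing $T_{n_t+1}$. But that extension step is asserted rather than proved (the text says the structure ``can give us a hint that'' the required values exist), and it runs into precisely the simultaneity problem you identified: a value that survives pairwise cleaning is guaranteed a compatible partner in each other combination \emph{separately}, not a single partner compatible with all previously fixed choices \emph{at once}. Your diagnosis of where the difficulty sits is therefore sharper than the paper's treatment of it, but neither your fixing-and-recleaning plan nor the paper's induction actually closes the gap --- and this is the step on which the paper's claimed conclusion ultimately rests.
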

\begin{proof}

$\Rightarrow$\\
This can easily be proved using induction.
We'll take induction not for clauses but for clause groups. In this proof $n_{t}$ - number of clause groups. It's
evident that $n_{t} \leq n$. In case $n_{t} \leq k + 1$ statement is evident because cleaning of values of 
relationship structure is reduced to clearing the only clause combination.

Let the case $n^{0}_{t} = k + 1$ be the basis of induction.
Let's assume statement is right for $n_{t} > k + 1$. We need to prove $(n_{t} + 1)$ case.
Let $A^{n_{t} + 1}(x)$ - source k-CNF (see Definition 4). $R = R(A)$ - relationship structure for it. 
$V$ - value set of relationship structure induced by k-CNF $A(x)$.

Let $V_{C}$ = $C(V)$ - result of pair clearing method which is not empty ($V_{C} \neq \varnothing$).
After clearing relationship structure induced by k-CNF with $(n_{t}+1)$ clause groups we have not
empty value set of relationship structure. Let's choose any clause group $T_{n_{t}+1}$ (we'll use both types
of notation - $T_{i_{1}i_{2}...i_{k}}$ which shows variables involved in clause group building and
$T_{j}, j \in \{1, \cdots, n_{t}+1\}$) - a serial number of clause group from formulae $A^{n_{t} + 1}(x)$.
Let's look at $B^{n_{t}}(x)$ - formulae which has the same clause groups as $A^{n_{t} + 1}(x)$ excluding
$T_{n_{t}+1}$. Let $R_{B}$ - relationship structure based on $B^{n_{t}}(x)$, $V_{B}$ - value set of 
this relationship structure. It's evident that all clause combinations of $R_{B}$ are clause combinations 
of $R$. Beside them $R$ has clause combinations which contain $T_{n_{t}+1}$ with all possible combination 
without repetition of $k$ clause groups which are common for $A^{n_{t} + 1}(x)$ and $B^{n_{t}}(x)$ (i. e. 
$F(T_{n_{t}+1}, T_{1}, T_{2}, \cdots, T_{k})$).

Let's $V_B$ has value sets of clause combinations the same as value sets of corresponding clause
combinations of $V_C$. It's evident that $C(V_B) = V_B$. $V_{C} \neq \varnothing$ $\Rightarrow$ $V_{B} 
\neq \varnothing$ $\Rightarrow$ exists
$V_{B}^{1} \subseteq V_{B}$ where $V_{B}^{1}$ - unclearable value set of relationship structure based on
k-CNF $B^{n_{t}}(x)$ where each value set of clause combination consists of 1 value. (according to induction
step).
Now we need show that $\exists$ $V_{A}^{1} \subseteq V_{A}$ - unclearable value set of relationship structure 
based on k-CNF $A^{n_{t}+1}(x)$ where each value set of clause combination consists of 1 value. 
This proof is very trivial.

Indeed, let's look at $T_{n_{t}+1}$(another notation for this clause group is $T_{l_{(n_{t} + 1)1}l_
{(n_{t} + 1)2} \cdots l_{(n_{t} + 1)k}}$). In this clause group there are 2 types of variables: 
those that present at least in one clause group $T_{j}, j \in \{1, \cdots, n_{t}\}$(common variables) 
and those that absent in this set. Let's explore first group (present). 
We can say that exists such clause combination $F(T_{n_{t}+1}, T_{i_{1}}, T_{i_{2}}, \cdots,  
T_{i_{k}}, A)$ from relationship structure $R$ where all common variables from $T_{n_{t}+1}$ can be found 
at least in one of other members of this clause combination: $T_{i_{1}}, T_{i_{2}}, \cdots,  T_{i_{k}}$. 
This statement can easily be proved by building this clause combination. Number of common variables 
can't be greater than $k$. So we can find corresponding clause group for each common variable which 
also contains this variable. Number of such clause groups is less or equal $k$ and if it's less we 
add arbitrary clause groups in order to get clause combination which contains $k + 1$ clause groups.
And now let's build another clause combination $F(T_{i_{k+1}}, T_{i_{1}}, T_{i_{2}}, \cdots,  T_{i_{k}}, A)$
which has $k$ common clause groups with $F(T_{n_{t}+1}, T_{i_{1}}, T_{i_{2}}, \cdots,  T_{i_{k}}, A)$ and
$T_{i_{k+1}}$ is a clause group from $B^{n_{t}}(x)$ (this clause group can be found because $n_{t} > 
k + 1$).

By the way we need prove that each variable of clause combination in unclearable value set of
relationship structure where each value set of clause combination consists of 1 value has the same value in
all clause combinations of that value of relationship structure. This result will also be used in next lemma.
That's easy to be shown.

Let $x_{i}$ - arbitrary variable presented in relationship structure. Let $F_{1} = F(T_{x_{i}x_{j}\cdots})$
and $F_{2} = F(T_{x_{i}x_{z}\cdots})$ - 2 different clause combinations which are parts of relationship
structure $R$. $V^{1}$ - unclearable values of relationship structure where each value set of clause
combination consists of 1 value. Let value $V_{1}^{F_{1}}$ of clause combination from $V_{1}$ which corresponds $F_{1}$ and
value $V_{1}^{F_{2}}$ of clause combination from $V_{1}$ which corresponds $F_{2}$ have different value of variable $x_{i}$.
Then operation C($V_{1}^{F_{1}}$, $V_{1}^{F_{2}}$) will give empty sets to both values. But that's contradiction because
values of relationship structure is unclearable.

The fact that $V_C$ is not empty and $V_{B}^{1} \subseteq V_{B}$ means that value of clause
combination $F(T_{i_{k+1}}, T_{i_{1}}, T_{i_{2}}, \cdots,  T_{i_{k}}, A)$ from $V_{B}^{1}$ is also a value
of the same clause combination from $V_{B}$ and from $V_{C}$. The fact that it can't be deleted during 
clearing means that exists value $V^{B}_{T_n}$ of clause combination $F(T_{n_{t}+1}, T_{i_{1}}, T_{i_{2}}, 
\cdots,  T_{i_{k}}, A)$ from $V_{C}$ which has the same values of common variables as value of 
$F(T_{i_{k+1}}, T_{i_{1}}, T_{i_{2}}, \cdots, T_{i_{k}}, A)$ from $V_{B}^{1}$. The only thing we need to 
prove now is that all clause combinations from $V_{C}$ which contain $T_{n_{t}+1}$ have value which can 
be added to $V_{B}^{1}$ and $V^{B}_{T_n}$ to create new value of relationship structure $V_{C}^{1}$  
which is unclearable.

Let's notice that these clause combinations don't give any new variables to clause combinations of $R_{B}$
and $F(T_{n_{t}+1}, T_{i_{1}}, T_{i_{2}}, \cdots,  T_{i_{k}}, A)$. This fact and the fact that in 
$V_{B}^{1}$ all values of the same variables in different clause combinations are the same can give us 
a hint that value of each clause combination which contains $T_{n_{t}+1}$ consisted of the same variable 
values as they presented in $V_{B}^{1}$ and value of clause combination $F(T_{n_{t}+1}, T_{i_{1}}, T_{i_{2}},
 \cdots,  T_{i_{k}}, A)$ discussed in previous paragraph.

$\Leftarrow$\\
This side is evident: the fact that $\exists$ $V_{1} \subseteq V_{res}$ means that $V_{res} \neq 
\varnothing$. \\
Lemma is proved.
\end{proof}

\begin{mylemma}
Let $V_{1}$ - value set of relationship structure based on k-CNF $A(x)$ where each value 
set of clause combination consists of 1 value of this clause combination. $V_{1}$ is unclearable
$\Leftrightarrow$ k-CNF $A(x)$ is equal to 1 on this value set.
\end{mylemma}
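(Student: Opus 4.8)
The plan is to exploit the singleton structure of $V_1$. Since every value set of clause combination occurring in $V_1$ consists of exactly one value, the clearing operation applied to any two of these value sets does only one of two things: it leaves both single values untouched, when they agree on all variables the two corresponding clause combinations have in common, or it empties both, when they disagree on at least one common variable. Hence the first thing I would record is the reformulation: $V_1$ is unclearable if and only if, for every pair of clause combinations of $R(A)$, the two associated single values coincide on the variables shared by the two combinations. Both implications of the lemma are then read off from this reformulation.

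For the direction ``unclearable $\Rightarrow A(x)=1$'' I would turn pairwise agreement into one global assignment $x^{*}$. The key observation---the same one established as an auxiliary claim inside the proof of Lemma 1---is that any two clause combinations that both contain a variable $x_a$ necessarily share $x_a$, and therefore assign it the same value; so $x_a$ acquires a well-defined value $x^{*}_a$ throughout the structure, and $x^{*}$ is thereby defined on every variable occurring in $R(A)$. I would then run a covering argument: an arbitrary clause $c$ of $A$ sits in some clause group $T$, and, for $n_t \geq k+1$, that group lies in at least one clause combination $F$ of $R(A)$. By the definition of a value of a clause combination, the unique value of $F$ in $V_1$ satisfies every clause of every group of $F$, in particular $c$; and since this value agrees with $x^{*}$ on the variables of $T$, the assignment $x^{*}$ satisfies $c$ as well. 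As $c$ was arbitrary, $A(x^{*})=1$.

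The converse, ``$A(x)=1$ on this value set $\Rightarrow$ unclearable'', is immediate. The hypothesis means precisely that the single values of $V_1$ are the restrictions of one satisfying assignment $x^{*}$; any two of them therefore agree on their common variables because both equal $x^{*}$ there, and by the reformulation no pair clearing can remove anything, so $C(V_1)=V_1$.

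The step I expect to be the genuine obstacle is the coverage claim in the forward direction: one must guarantee that every clause group, and hence every clause of $A$, really does appear in some $(k+1)$-element clause combination of $R(A)$, and one must treat the degenerate regimes $n_t \leq k+1$ by hand, exactly as the proof of Lemma 1 does. Once coverage is secured, the singleton reformulation of unclearability and the assembly of a consistent global assignment from pairwise agreement are both routine.
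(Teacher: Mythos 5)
Your proposal is correct and follows essentially the same route as the paper's own proof: both directions rest on the auxiliary fact from Lemma 1 that a variable takes the same value across all clause combinations of an unclearable singleton structure, the forward direction then assembles a global assignment and covers each clause by a clause combination containing its group, and the converse restricts a satisfying assignment to the combinations. You are merely more explicit than the paper about the singleton reformulation of unclearability and the degenerate regime $n_t \leq k+1$, which the paper glosses over here.
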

\begin{proof}
$\Rightarrow$\\
It was proved in Lemma 1 that corresponding variables have the same values in different clause combinations. Let's have a glance at k-CNF which variables values are the same as in the structure. It's evident that such 
k-CNF is equal to 1. Indeed for each clause exists clause combination that involves this clause. Clause 
combination is equal to 1 on this set $\Rightarrow$ clause itself is equal to 1. All clauses on this set 
are equal 1 $\Rightarrow$ k-CNF value on this set is equal 1.\\
 $\Leftarrow$\\
This proof is trivial. We take variable values $x_{12 \cdots m}$ that make k-CNF equal 1. It's evident 
that in value set of relationship structure $V_{1}$ based on $A(x)$ each value set of clause combination 
which is a member of $V_{1}$ and has the same variable values as $x_{12 \cdots m}$ is unclearable.\\
Lemma is proved.
\end{proof}

\begin{mytheorem}
Result of pair cleaning method applied to source k-CNF is not empty $\Leftrightarrow$ $\exists$ solution of
equation $k-CNF = 1$.
\end{mytheorem}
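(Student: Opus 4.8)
The plan is to obtain the theorem as an immediate composition of the two lemmas already established, since together they translate ``nonempty cleaning result'' into ``existence of a satisfying assignment'' through the intermediate notion of an unclearable single-value value set of the relationship structure. First I would fix notation: the pair cleaning method applied to the source k-CNF produces $V_{res} = C(V)$, where $V = V(R(A))$ is the value set of the relationship structure induced by $A(x)$. This is precisely the object to which Lemma 1 applies, so the whole theorem reduces to threading the two biconditionals together.

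For the forward implication I would assume $V_{res} \neq \varnothing$. By Lemma 1 there exists $V_{1} \subseteq V_{res}$ that is an unclearable value set of the relationship structure based on $A(x)$ in which every clause combination carries a single value. Since $V_{1}$ is unclearable, Lemma 2 gives that $A(x) = 1$ on the assignment encoded by $V_{1}$; here I would invoke the auxiliary fact proved inside Lemma 1, namely that each variable takes the same value across all clause combinations of an unclearable single-value structure, so that $V_{1}$ genuinely determines one global boolean vector $x_{0} \in B^{m}$. Hence $A(x_{0}) = 1$ and a solution exists.

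For the backward implication I would assume some $x_{0}$ satisfies $A(x_{0}) = 1$. I would build $V_{1}$ by restricting $x_{0}$ to the variable set of each clause combination, giving one value per combination; because $x_{0}$ satisfies every clause it satisfies every clause combination, so each restriction is a legitimate value and $V_{1}$ is a value set based on $A(x)$. Lemma 2 then shows $V_{1}$ is unclearable. The remaining step, and the one requiring real care, is to show $V_{1} \subseteq V_{res}$, so that the trivial reverse direction of Lemma 1 yields $V_{res} \neq \varnothing$.

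The main obstacle is exactly this persistence claim: that values coming from a genuine global solution are never deleted by pair cleaning. I would argue it by induction on the cleaning steps, using the observation that for any two clause combinations the $V_{1}$-values agree on their common variables, both being restrictions of the single vector $x_{0}$. Therefore at every stage each $V_{1}$-value finds a matching partner among the surviving values --- in fact among the other $V_{1}$-values, which by the same argument also survive --- so no cleaning operation $C(V^{i}, V^{j})$ can remove it. Consequently $V_{1}$ is contained in every intermediate set produced by the algorithm and hence in $V_{res}$, which completes the reverse direction and the theorem.
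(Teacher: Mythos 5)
Your proof takes essentially the same route as the paper, which disposes of the theorem in a single sentence (``Consecutive usage of Lemma 1 and Lemma 2 proves the theorem''). Your version is correct and in fact more complete: the persistence argument you supply for the backward direction --- that values obtained by restricting a genuine satisfying assignment $x_0$ agree on common variables, hence always find matching partners and survive every cleaning step, giving $V_{1} \subseteq V_{res}$ --- is exactly the step the paper's one-line proof silently assumes, since the reverse direction of Lemma 1 requires $V_{1}$ to sit inside $V_{res}$, not merely to exist as an unclearable structure.
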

\begin{proof}
Consecutive usage of Lemma 1 and Lemma 2 proves the theorem.
\end{proof}

\begin{mytheorem}
Let \\
$V$ - value set of relationship structure based on k-CNF $A(x)$, \\
$V_{C} = C(V)$ - cleared value set of relationship structure, \\
$V_{C}^{1}$ - unclearable value set of relationship structure based on k-CNF $A(x)$ where each
value set of clause combination based on k-CNF consists of 1 value, \\
$V_{F_{i}}$ - value set of clause combination $F_{i}$, \\
$V_{F_{i}}^{C}$ - values of clause combination $F_{i}^{C}$, \\
$V_{F_{i}}^{0}$ - value of clause combination $F_{i}$. \\
Then
$V_{F_{i}}^{0} \in V_{F_{i}}$ - member of $V_{C} = C(V)$ $\Leftrightarrow$ $\exists V_{C}^{1}:
V_{F_{i}}^{0} \in V_{F_{i}}^{C}$ - member of $V_{C}^{1}$
\end{mytheorem}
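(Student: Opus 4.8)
The plan is to read the theorem through Lemma 2, which identifies the single-valued unclearable value sets $V_C^1$ with the genuine solutions of $A(x)$: a selection of one value per clause combination is unclearable exactly when the common variables agree everywhere and the induced assignment $x^\ast$ satisfies $A$. Under this dictionary the claim becomes the statement that a value $V_{F_i}^0$ survives the pair cleaning, i.e. $V_{F_i}^0 \in V_{F_i}^C$, if and only if some solution $x^\ast$ of $A(x)$ restricts to $V_{F_i}^0$ on the variables of $F_i$. I would prove the two directions separately, the second being the substantive one.

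For the direction $\Leftarrow$ I would show that cleaning never discards a value that agrees with a full solution. Suppose $V_C^1 \subseteq V_C$ is single-valued and unclearable with $F_i$-value $V_{F_i}^0$; by Lemma 2 it corresponds to an assignment $x^\ast$ with $A(x^\ast)=1$. For every clause combination $F_j$ the restriction of $x^\ast$ to the variables of $F_j$ makes all clauses of $F_j$ true, hence lies in the induced value set of $F_j$, and any two such restrictions agree on common variables because they come from one global $x^\ast$ (this is precisely the ``corresponding variables have the same values'' fact established inside the proof of Lemma 1). Consequently, in every pair clearing $C(V_{F_i},V_{F_j})$ the restriction to $F_i$ always finds its matching restriction to $F_j$, so neither is deleted; a straightforward induction over the clearing operations of the algorithm shows the whole family of restrictions survives to $V_C$, and in particular $V_{F_i}^0 \in V_{F_i}^C$. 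The inclusion $V_C^1 \subseteq V_C$ itself is then immediate from the definition of $\subseteq$ for value sets of relationship structures.

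For the direction $\Rightarrow$ I would localise Lemma 1 at the surviving value. Form the based value set $V'$ that coincides with $V_C$ on every clause combination except $F_i$, where it is pinned to the singleton $\{V_{F_i}^0\}$; then $V' \subseteq V_C$ componentwise. The goal is to show $C(V') \neq \varnothing$ with its $F_i$-component still equal to $\{V_{F_i}^0\}$, for then Lemma 1 produces a single-valued unclearable $V_C^1 \subseteq C(V') \subseteq V_C$ whose $F_i$-value is forced to be $V_{F_i}^0$, which is exactly what the right-hand side asserts. To obtain nonemptiness after pinning I would re-run the inductive construction of Lemma 1 (induction on the number $n_t$ of clause groups) but seed it with the variable values dictated by $V_{F_i}^0$: since $V_{F_i}^0$ is unclearable inside $V_C$, it has a consistent partner in every other combination, and the clause-group-by-clause-group extension of Lemma 1 shows these partners can be chosen coherently without ever emptying a combination.

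The main obstacle is exactly this last point, namely proving that pinning $F_i$ to $V_{F_i}^0$ leaves the cleaned structure nonempty. Pairwise survival of $V_{F_i}^0$ in $V_C$ only guarantees a partner in each $F_j$ individually; turning these local partners into one globally consistent single-valued selection is the ``pairwise-consistency implies global-consistency'' content that Lemma 1 supplies for this relationship structure, and the argument must invoke its inductive mechanism rather than pairwise reasoning alone. I would therefore spend most of the effort verifying that the seeded induction of Lemma 1 goes through, after which both directions close quickly.
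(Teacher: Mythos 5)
The paper itself gives no proof of this theorem beyond the sentence ``Scheme of proof is the same as for Lemma 1,'' so your plan is at least consistent with the author's declared intent: your $\Leftarrow$ direction (a single-valued unclearable selection, i.e.\ by Lemma 2 a genuine solution $x^{\ast}$, restricts to every clause combination, these restrictions agree on common variables, and an unclearable subset is never touched by the clearing operator, so it survives into $V_C$) is correct and is the easy half. The problem is the $\Rightarrow$ direction, where you correctly locate the difficulty and then leave it unresolved. You propose to pin the $F_i$-component of $V_C$ to the singleton $\{V_{F_i}^0\}$ and invoke Lemma 1 on the result, but Lemma 1 as stated applies only when $V$ is the value set \emph{induced} by $A(x)$; the pinned structure $V'$ is merely \emph{based on} $A(x)$, so the lemma cannot be cited and must be reproved in this stronger, seeded form. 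You acknowledge this (``re-run the inductive construction \ldots seeded with the variable values dictated by $V_{F_i}^0$'') but you never carry the induction out, and this is precisely the step that carries all the content of the theorem: the fact that $V_{F_i}^0$ survives pairwise clearing guarantees a partner in each other combination $F_j$ \emph{separately}, and converting these local partners into one globally consistent single-valued selection is exactly the ``pairwise consistency implies global consistency'' claim that the whole method rests on. A proof that merely gestures at Lemma 1's mechanism here is circular in effect, because the seeded statement is strictly stronger than Lemma 1 (Lemma 1 asserts existence of \emph{some} single-valued unclearable subset; you need one \emph{through a prescribed value}).

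Concretely, the failure mode you would have to rule out is this: after pinning $F_i$ to $\{V_{F_i}^0\}$, clearing may delete values in neighbouring combinations $F_j$ that were the only partners of other values elsewhere, and the cascade may eventually delete the very partners of $V_{F_i}^0$ itself, emptying the structure even though $V_{F_i}^0$ survived in the unpinned $V_C$. Nothing in your write-up, and nothing in the paper's proof of Lemma 1, excludes this; the inductive step of Lemma 1 adds one clause group at a time to an induced structure and does not address propagation from an arbitrary pinned component. Until you exhibit the seeded induction in full --- in particular the base case $n_t = k+1$ with a pinned component and the inductive step showing the chosen partner of $V_{F_i}^0$ in each $F_j$ can be extended coherently --- the $\Rightarrow$ direction remains an assertion, not a proof. (Since the theorem in this direction is essentially equivalent to the correctness of the entire pair cleaning method, this is not a cosmetic omission.)
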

\begin{proof}
Scheme of proof is the same as for Lemma 1, it's full description will be given a bit later.
\end{proof}

So we have not only algorithm for solving k-satisfiability problem but also algorithm for solving equation 
$A(x) = 1$. Of course in common case it's not polynomial (because number of solutions is $O(2^n)$). But 
process of getting each root of equation is polynomial. We'll describe it in full preprint version of this 
paper. 
\section{Complexity}
Number of values clause group can take is less than $2^{k}$.\\
Number of values clause combination can take is less than $2^{k(k+1)}$.\\
Number of clause combinations in relationship structure is $C_{n_{t}}^{k+1}$.\\
Number of comparisons during one iteration pass is less than $2^{2k(k+1)}(C_{n_{t}}^{k+1})^{2}$.\\
Number of iterations is less than $2^{k(k+1)}C_{n_{t}}^{k+1}$.\\
That means that number of operations for algorithm is less than $2^{3k(k+1)}(C_{n_{t}}^{k+1})^{3}$.\\
Therefore complexity of $k-SAT$ is $O(n_{t}^{3(k+1)})$.
For 3-SAT it's $O(n_{t}^{12})$.\\
$2^{-k}n \leq n_{t} \leq n$ $\Rightarrow$ method's complexity is $O(n^{3(k+1)})$.
For 3-SAT it's $O(n^{12})$. That means that pair cleaning method is polynomial and P=NP.

\end{document}